\newtheorem{theorem}{Theorem}
\newtheorem{lemma}{Lemma}
\newtheorem{corollary}{Corollary}
\newtheorem{observation}{Observation}
\newcommand{\prb}[1]{\textnormal{\scshape #1}}
\title{On the complexity of finding a spanning even tree in a graph\thanks{This work is partially supported by JSPS KAKENHI Grant Numbers
JP20K04973, 
JP20H00595, 
JP20H05964, 
JP20H05967, 
JP21K19765, 
JP21K17707, 
JP22H00513, 
JP23K28034, 
JP23H04388, 
JP24H00686, 
JP24H00697, 
JP24K02898, 
and by JST, CRONOS, Japan Grant Number JPMJCS24K2. 
}
}
\author{
Tesshu Hanaka\thanks{Kyushu University. Email: \texttt{hanaka@inf.kyushu-u.ac.jp}} \and
Yasuaki Kobayashi\thanks{Hokkaido University. Email: \texttt{\{koba, seto\}@ist.hokudai.ac.jp}} \and
Kazuhiro Kurita\thanks{Nagoya University. Email: \texttt{kurita@i.nagoya.ac.jp}, \texttt{ono@nagoya-u.jp}} \and
Yasuko Matsui\thanks{Tokai University. Email: \texttt{yasuko@tokai.ac.jp}} \and
Atsuki Nagao\thanks{Ochanomizu University. Email: \texttt{a-nagao@is.ocha.ac.jp}} \and
Hirotaka Ono\footnotemark[3] \and
Kazuhisa Seto\footnotemark[2]
}
\begin{document}

\maketitle

\begin{abstract}
A tree is said to be \emph{even} if for every pair of distinct leaves, the length of the unique path between them is even.
In this paper we discuss the problem of determining whether an input graph has a spanning even tree.
Hofmann and Walsh [Australas. J Comb. 35, 2006] proved that this problem can be solved in polynomial time on bipartite graphs.
In contrast to this, we show that this problem is NP-complete even on planar graphs.
We also give polynomial-time algorithms for several restricted classes of graphs, such as split graphs, cographs, cobipartite graphs, unit interval graphs, and block graphs.
\end{abstract}

\section{Introduction}

A \emph{spanning tree} is a fundamental combinatorial object in graphs.
A natural optimization problem regarding this concept is \prb{Minimum Spanning Tree}, where the goal is to find a minimum weight spanning tree of a given edge-weighted graph, which is well known to be solvable in polynomial time.
Given this fact, there are numerous studies aiming at finding a (minimum) spanning tree with some additional constraints and/or objectives~\cite{PapadimitriouY82,GareyJ79,BercziKKYY24,HassinL04,Goemans06,HanakaK23}.

In this paper, we focus on yet another constraint: a \emph{parity constraint}.
A tree is said to be \emph{even} (resp.\ \emph{odd}) if for every pair of distinct leaves, the distance between them is even (resp.\ \emph{odd}), where the distance is defined as the number of edges in the unique path between them.
A \emph{spanning even tree} in a graph $G$ is a spanning tree that is even.
A \emph{spanning odd tree} is defined analogously.
It is not hard to see that every spanning odd tree in $G$ is a Hamiltonian path: If a spanning tree has at least three leaves, one of the pairs must be connected by a path of even length.
Hence, the problem of determining whether a graph has a spanning odd tree is equivalent to \prb{Hamiltonian Path}, which is known to be NP-complete even on bipartite graphs~\cite{ItaiPS82}.
However, the complexity of finding a spanning even tree is not much known in the literature.
This problem was studied by Hoffman and Walsh~\cite{EvenST:HoffmanW06}, who gave a polynomial-time algorithm that decides whether a given bipartite graph has a spanning even tree using a polynomial-time algorithm for \prb{Matroid Intersection}.\footnote{A faster solution is given in the commentary on Problem H in the 2024 ICPC Asia Yokohama Regional Contest. See~\url{https://icpc.iisf.or.jp/past-icpc/regional2024/commentaries-2024.pdf}.}
To the best of our knowledge, there are no further algorithmic or complexity-theoretic results on this problem.
In particular, the complexity for general graphs was open.

Several combinatorial problems with parity requirements have interesting distinctions from their ``conventional'' ones.
The even path problem, which asks to find a path of even length between two specified vertices, is a famous example of this kind.
This problem is ``equivalent'' to the odd counterpart, as there is a trivial reduction from one to the other.
The even path problem can be reduced to a perfect matching problem~\cite{LapaughP84}, which can be solved in polynomial time.
The (shortest) even/odd cycle problems in directed graphs are other notable examples: For the odd case, it is easy to find a (shortest) odd cycle in a directed graph by finding a shortest odd walk, which can be found in polynomial time, while the even counterpart is considerably less trivial~\cite{BjorklundHK22}.

Apart from the computational perspective, Jackson and Yoshimoto~\cite{JacksonY24} discussed several conditions under which regular graphs have a spanning even tree.
They showed that every regular bipartite graph has no spanning even trees at all.
They also conjectured that every regular non-bipartite connected graph has a spanning even tree and verify this conjecture for some spacial case. 
This conjecture was eventually confirmed affirmatively by~\cite{ellingham2024spanningweaklytreesgraphs}.

Motivated by the work of Hoffman and Walsh~\cite{EvenST:HoffmanW06} and Jackson and Yoshimoto~\cite{JacksonY24}, we address the complexity of the problem of finding a spanning even tree in a graph.
We show that the problem of deciding whether a given graph has a spanning even tree is NP-complete even on planar graphs.
To circumvent this intractability, we consider the problem on several graph classes.
For cographs, cobipartite graphs, unit interval graphs, and split graphs, we give simple necessary and sufficient conditions for having spanning even trees.
These conditions immediately enable us to find a spanning even tree in polynomial time on these classes of graphs.
For block graphs, we design a slightly nontrivial algorithm to solve the problem by exploiting their block-cut tree structure.

\section{Preliminaries}

Throughout this paper, we assume that all graphs are simple.
Let $G = (V, E)$ be a graph.
For $v \in V$, we denote by $N_G(v)$ the set of (open) neighbors of $v$ in $G$, that is, $N_G(v) = \{u \in V : \{u, v\} \in E\}$.
This notation is extended to sets: $N_G(X) = \bigcup_{v \in X}N_G(v) \setminus X$ for $X \subseteq V$.

A \emph{tree} is a connected graph without any cycles.
A \emph{leaf} of a tree (or more generally, of a graph) if a vertex of degree~$1$. 
A tree is said to be \emph{even} if for every pair of distinct leaves. the number of edges of the unique path between them is even.
Every even tree can be characterized in the following form.
\begin{observation}\label{obs:bichromatic-spanning-tree}
    A tree $T$ is even if and only if it has a proper $2$-coloring with black and white such that all the leaves are colored in black.
\end{observation}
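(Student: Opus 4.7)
The plan is to use the fact that a tree, being bipartite and connected, admits a proper $2$-coloring that is unique up to swapping the two color classes. In any such coloring, two vertices lie in the same color class if and only if the unique path between them in the tree has even length, since each edge of the path alternates the color. I will use this correspondence in both directions.

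For the ``only if'' direction, assume $T$ is even. Fix the (essentially unique) proper $2$-coloring of $T$ and, by swapping colors if necessary, color some fixed leaf $\ell_0$ black. By the bipartite characterization just recalled, another vertex $v$ is black if and only if the distance between $\ell_0$ and $v$ is even. Applying this to any other leaf $\ell$, the assumption that $T$ is even guarantees that the distance from $\ell_0$ to $\ell$ is even, so $\ell$ is also black. Hence every leaf is black in this coloring.

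For the ``if'' direction, suppose $T$ admits a proper $2$-coloring in which every leaf is black. For any pair of distinct leaves $\ell, \ell'$, both endpoints lie in the same color class, so the unique $\ell$--$\ell'$ path alternates colors starting and ending with black and therefore has even length. Hence $T$ is even.

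The argument is essentially a direct application of the alternation property of proper $2$-colorings of bipartite graphs; the only step that requires care is being explicit about the (up-to-swap) uniqueness of the $2$-coloring of a connected bipartite graph, which is what lets us freely declare a chosen leaf to be black. There is no real obstacle beyond making this translation between ``same color class'' and ``even distance'' precise.
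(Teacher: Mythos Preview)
Your proof is correct; the paper does not actually supply a proof for this observation, treating it as evident and only remarking afterward that the admissible coloring is uniquely determined. Your argument via the standard bipartite $2$-coloring and the equivalence ``same color class $\Leftrightarrow$ even distance'' is precisely the intended justification.
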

We refer to such a coloring as an \emph{admissible coloring for $T$}. 
Note that for any even tree $T$, the admissible coloring for $T$ is uniquely determined.

\paragraph{Graph classes.} 
A \emph{split graph} is a graph $G = (V, E)$ such that $V$ can be partitioned into a clique $K$ and an independent set $I$.
Given a split graph, one can compute such a partition in linear time~\cite{HammerS81}.

A \emph{cobipartite graph} is the complement of a bipartite graph.
It is easy to see that for every cobipartite graph with at least two vertices, its vertex set can be partitioned into two cliques.

A \emph{cograph} is inductively defined as follows:
\begin{itemize}\setlength{\parskip}{0cm}
    \item a single-vertex graph is a cograph,
    \item for two vertex-disjoint cographs $G_1 = (V_1, E_1)$ and $G_2 = (V_2, E_2)$, the disjoint union $G_1 \cup G_2 = (V_1 \cup V_2, E_1 \cup E_2)$ of them is a cograph, and
    \item for two vertex-disjoint cographs $G_1 = (V_1, E_1)$ and $G_2 = (V_2, E_2)$, the complete join $G_1 \Join G_2 =  (V_1 \cup V_2, E_1 \cup E_2 \cup E_{\Join})$ of them is a cograph, where $E_{\Join} = \{\{u, v\} : u \in V_1, v \in V_2\}$.
\end{itemize}

Given a connected cograph $G$ with at least two vertices, one can compute two cographs $G_1$ and $G_2$ such that $G = G_1 \Join G_2$ in linear time~\cite{CorneilPS85}\footnote{More precisely, we can compute a rooted tree representation, called a \emph{cotree}, of a cograph in linear time~\cite{CorneilPS85}.}.

A graph $G = (V, E)$ is said to be \emph{interval} if it has an interval representation: An interval representation of $G$ consists of a set of intervals $\mathcal I = \{I_v : v \in V\}$ on the real line, and two vertices $u$ and $v$ are adjacent to each other in $G$ if and only if the corresponding intervals $I_u$ and $I_v$ intersect.
An interval graph is called a \emph{unit interval graph} if it has a unit interval representation, which is an interval representation that has only intervals of unit length.

A \emph{block graph} is a graph in which every biconnected component induces a clique.
We denote by $\mathcal B$ the set of biconnected components (i.e., maximal cliques) and by $C$ the set of cut vertices in $G$.
A \emph{block-cut tree} of $G$ is a tree $\mathcal T$ with vertex set $\mathcal B \cup C$ such that two vertices $B \in \mathcal B$ and $c \in C$ are adjacent if and only if $c \in B$.
Given a connected block graph, we can compute the block-cut tree of $G$ in linear time~\cite{HopcroftT73}.

\section{NP-hardness of finding a spanning even tree}
In this section, we prove that the problem of finding a spanning even tree is hard in general.

\begin{theorem}\label{thm:hardness}
    The problem of deciding whether an input graph $G$ has a spanning even tree is NP-complete.
\end{theorem}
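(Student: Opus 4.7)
Membership in NP is immediate: given a candidate spanning tree $T$ of $G$, compute the unique proper $2$-coloring of $T$ by a breadth-first search from any vertex in linear time, and accept iff all leaves of $T$ receive the same color; by \Cref{obs:bichromatic-spanning-tree} this certifies that $T$ is even.

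For NP-hardness, the plan is to reduce from Planar $3$-SAT, which is NP-complete by Lichtenstein's theorem. The guiding reformulation, obtained from \Cref{obs:bichromatic-spanning-tree}, is the following: $G$ has a spanning even tree if and only if there is a bipartition $(B, W)$ of $V(G)$ such that the bipartite subgraph $H = G[B, W]$ consisting of the edges across the partition is spanning and connected and admits a spanning tree in which every $W$-vertex has degree at least $2$. For a \emph{fixed} bipartition the problem is polynomial-time solvable by the result of Hoffman and Walsh, so all of the hardness must come from the freedom to choose the bipartition. The reduction's task is therefore to build a planar graph whose admissible bipartitions are in bijection with the satisfying assignments of the input formula.

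The central building block will be the \emph{pendant lock}: attaching a vertex of degree~$1$ to $v$ forces the pendant to be a black leaf, which in turn forces $v$ to be white, and hence to be an internal vertex (degree at least $2$) in every spanning even tree. Using pendant locks I plan to build a variable gadget that produces two ``twin'' literal vertices $x$ and $\bar{x}$ whose colors are forced to be opposite, encoding a Boolean assignment, together with a clause gadget attached to the three literal vertices that can be extended to a spanning even tree iff at least one of the three incident literals carries the ``true'' color. The gadgets are laid out along a planar embedding of the variable--clause incidence graph of the formula, so that planarity is preserved. Correctness in one direction is by exhibiting an explicit bipartition and spanning tree from a satisfying assignment; in the other direction, the bipartition induced by any spanning even tree, restricted to the variable gadgets, recovers a consistent satisfying assignment.

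I expect the main obstacle to be the clause gadget. It must simultaneously (i) be planar and of bounded size, (ii) admit no spanning-even-tree extension when all three incident literals are false, so that no spurious solutions arise, and (iii) compose gracefully when a literal appears in several clauses, which I plan to handle by a small ``copy chain'' that propagates a literal's color consistently across all of its occurrences. A secondary subtlety is ensuring global connectivity of the bipartite subgraph $H = G[B,W]$: locally each gadget permits the intended colorings, but the $B$--$W$ edges between gadgets must still connect the whole graph, which constrains how the gadgets are glued together along the planar embedding. Once these gadgets are designed and their local correctness verified, the two implications reduce to a direct case analysis.
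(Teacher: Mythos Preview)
Your high-level strategy matches the paper's: reduce from (Planar) \textsc{SAT}, use \Cref{obs:bichromatic-spanning-tree} to rephrase the target as ``choose a $2$-coloring so that the bichromatic subgraph has a spanning tree with no white leaves,'' and force colors via pendant (degree-$1$) vertices. The paper's variable gadget is exactly a small graph with two pendants whose only two spanning even subtrees encode $\texttt{true}$/$\texttt{false}$, just as you describe.

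The gap is that what you have written is a plan, not a proof: none of the gadgets is actually constructed, and you explicitly flag the clause gadget and global connectivity as unresolved obstacles. These are precisely the points where the work lies. In the paper, the clause gadget is \emph{not} a self-contained widget that checks ``at least one literal true''; instead each clause is a path $P_j$ of length~$2$, joined to each relevant variable's connection vertex $v_i$ by a path $P_{i,j}$ of length~$1$ (positive literal) or~$2$ (negative literal). The clause is satisfied iff $P_j$ can be attached to some $v_i$ without creating a white leaf, which happens exactly when the corresponding literal is true under the coloring forced in the variable gadget. Two further gadgets are needed to make this work: a \emph{garbage collector} triangle attached to the middle vertex of each length-$2$ $P_{i,j}$, so that dropping an edge of $P_{i,j}$ to break a cycle does not create a white leaf; and a \emph{connector gadget} $H_i$ between consecutive connection vertices $v_i,v_{i+1}$ that can, for any combination of colors of its endpoints, realize either a connecting or a disconnecting pattern without white leaves. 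These two devices are exactly what resolve the ``global connectivity of $G[B,W]$'' issue you anticipated. Until you supply concrete gadgets with these properties and verify the two directions of the equivalence, the argument is incomplete.
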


Obviously, the problem belongs to NP.
In the following, we give a polynomial-time reduction from \prb{Satisfiability}.
Let $\varphi$ be a CNF formula with variable set $X = \{x_1, x_2, \ldots, x_n\}$ and clause set $C = \{c_1, c_2, \ldots, c_m\}$.
We construct a graph $G_\varphi$ such that $G_\varphi$ has a spanning even tree if and only if $\varphi$ is satisfiable.

Before constructing the graph $G_\varphi$, we demonstrate an ``incomplete'' reduction to give an intuition of gadgets used in our proof.
We assume that each clause has at most one of $x_i$ and $\neg x_i$ for each $i$.
For each variable $x_i \in X$, the variable gadget $G_i$ consists of six vertices as in \Cref{fig:variable-gadget}.
The vertex $v_i$ in \Cref{fig:variable-gadget} is called the \emph{connection vertex} of $G_i$.
There are two possible ways to take a spanning even tree in $G_i$ as depicted in \Cref{fig:variable-gadget}.
These spanning even trees represent possible truth assignments $\alpha$ of $x_i$, where the left spanning tree $T_i$ indicates $\alpha(x_i) = \texttt{true}$ and the right spanning tree $F_i$ indicates $\alpha(x_i) = \texttt{false}$.
\begin{figure}
    \centering
    \includegraphics[width=0.4\textwidth]{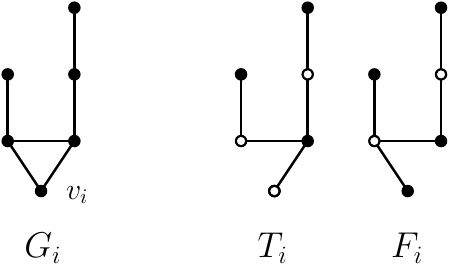}
    \caption{The left figure depicts the variable gadget $G_i$.
    There are two ways to take a spanning even tree in $G_i$.}
    \label{fig:variable-gadget}
\end{figure}
For each clause $c_j \in C$, we use a path $P_j$ of length $2$.
We let $w_j$ be one of the end vertex of the path.
We add a path of length $1$ (i.e., an edge) between $v_i$ and $w_j$ if $c_j$ contains $x_i$ (as a positive literal) and add a path of length $2$ between them if $c_j$ contains $\neg x_i$, which is denoted by $P_{i,j}$.
The graph constructed at this point is illustrated in \Cref{fig:partial}.
\begin{figure}[ht]
    \centering
    \includegraphics{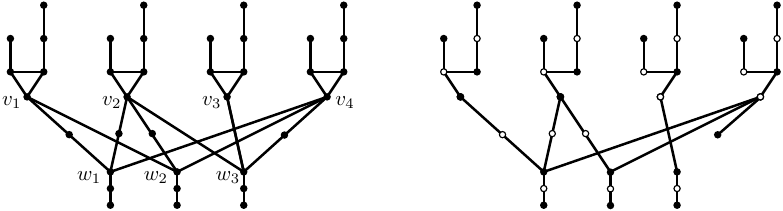}
    \caption{The figures illustrate the (partial) graph for $\varphi = (\neg x_1 \lor \neg x_2 \lor x_4) \land (x_1 \lor \neg x_2 \lor x_4) \land (x_2 \lor x_3 \lor \neg x_4)$ and the subgraph constructed from the assignment $(x_1, x_2, x_3, x_4) = (\texttt{false}, \texttt{false}, \texttt{true}, \texttt{true})$.}
    \label{fig:partial}
\end{figure}

From a satisfying assignment $\alpha$ of $\varphi$, we construct a subgraph of $G_\varphi$ as follows.
For each variable $x_i$, we include all edges in $T_i$ if $\alpha(x_i) = \texttt{true}$ and all edges in $F_i$ otherwise.
We also include all edges in the paths corresponding to the clauses.
Let $x_i$ be a variable and let $c_j$ be a clause that contains a literal $\ell \in \{x_i, \neg x_i\}$.
Suppose that $\ell = x_i$.
If $\alpha(x_i) =\texttt{true}$, we include the edge $\{v_i, w_j\}$.
Suppose otherwise that $\ell = \neg x_i$.
Then we include the edge of $P_{i,j}$ incident to $v_i$ in either case ($\alpha(x_i) = \texttt{true}$ or $\texttt{false}$), and include moreover the other edge of $P_{i,j}$ if $\alpha(x_i)=\texttt{false}$.
The construction from a specific assignment is depicted in \Cref{fig:partial}.
In the following, we use colors black and white to indicate that the parity of the distance from any leaves in the subgraph, while leaves are always colored in black.
It is not hard to observe that there is no odd path between any pair of leaves but the subgraph constructed in the above way may have cycles or more than one components.
In order to obtain a spanning even tree from a satisfying assignment, we use other two types of gadgets: \emph{garbage collector gadgets} and \emph{connector gadgets}.

To break a cycle in the subgraph constructed as above, it suffices to remove an edge from some $P_{i,j}$, as other edges are not included in any cycles.
If a cycle contains an edge of the form $\{v_i, w_j\}$, we can simply remove it while maintaining the property that $w_j$ is connected to some variable gadget.
Moreover, this does not make a new leaf, which maintains the property that there is no odd path between any pair of leaves.
However, we may not simply remove an edge of the path $P_{i,j}$ when it consists of two edges, since the removal may make a new leaf that belongs to an odd path between some leaves.
To avoid this situation, we attach a triangle to the middle vertex of $P_{i,j}$ as in \Cref{fig:garbage-collector-gadget}.
We call this triangle a \emph{garbage collector gadget}.
The garbage collector gadget allows us to avoid making a white leaf as the other two vertices in the gadget would be black leaves.
\begin{figure}
    \centering
    \includegraphics{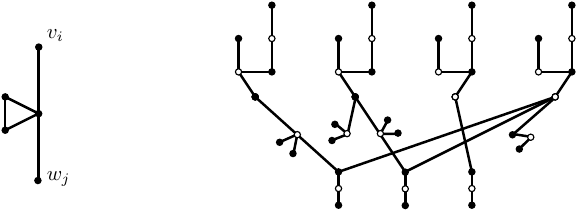}
    \caption{A garbage collector gadget and a forest that has no white leaves.}
    \label{fig:garbage-collector-gadget}
\end{figure}

To ensure the connectivity of variable gadgets, we add a connector gadget $H_i$ (illustrated in \Cref{fig:connector-gadget}) between two connection vertices $v_i$ and $v_{i+1}$ by identifying $a_i$ with $v_i$ and $b_i$ with $v_{i+1}$ for each $1 \le i < n$.
As shown in \Cref{fig:connector-gadget}, we can freely connect or disconnect $a_i$ and $b_i$ regardless of their colors, which enables us to connect isolated variable gadgets to the other part without introducing white leaves and undesired cycles.
We refer to the left three subgraphs that have a path between $a_i$ and $b_i$ as \emph{connector patterns} and to the other three subgraphs as \emph{disconnector patterns}.
In particular, a connector pattern (or disconnector pattern) is said to be \emph{consistent} (\emph{with two variable gadgets $G_i$ and $G_{i+1}$}) if $a_i$ and $b_i$ have the same color with $v_i$ and $v_{i+1}$, respectively.
\begin{figure}
    \centering
    \includegraphics[width=0.5\textwidth]{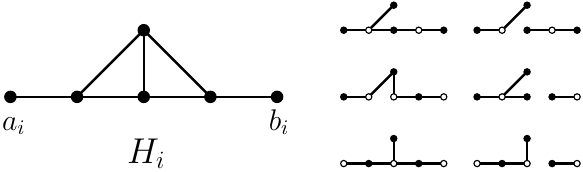}
    \caption{The left figure depicts the connector gadget $H_i$.
    }
    \label{fig:connector-gadget}
\end{figure}

The complete construction of $G_\varphi$ and a spanning even tree of $G_\varphi$ corresponding to some satisfying assignment of $\varphi$ are illustrated in \Cref{fig:complete}.
\begin{figure}
    \centering
    \includegraphics{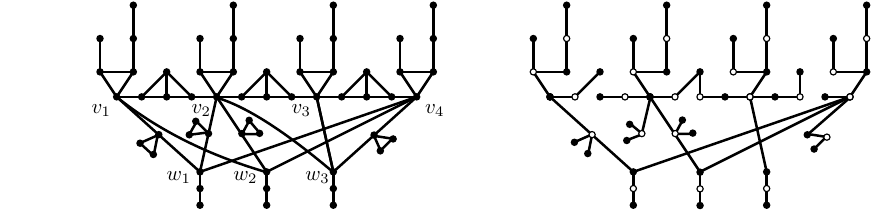}
    \caption{The figures illustrate the entire graph $G_\varphi$ for $\varphi = (\neg x_1 \lor \neg x_2 \lor x_4) \land (\neg x_1 \lor \neg x_2 \lor x_4) \land (x_2 \lor x_3 \lor \neg x_4)$ and a spanning even tree obtained from the assignment $(x_1, x_2, x_3, x_4) = (\texttt{false}, \texttt{false}, \texttt{true}, \texttt{true})$.}
    \label{fig:complete}
\end{figure}

\begin{lemma}\label{lem:hardness}
    There is a satisfying assignment of $\varphi$ if and only if $G_\varphi$ has a spanning even tree.
\end{lemma}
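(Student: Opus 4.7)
Both directions are essentially constructive, with the reverse implication requiring the more careful analysis.

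\emph{Forward direction.} Given a satisfying assignment $\alpha$, I would form a subgraph $H$ of $G_\varphi$ following the recipe sketched in the prose preceding the lemma: include $T_i$ in $G_i$ when $\alpha(x_i) = \texttt{true}$ and $F_i$ otherwise; take both edges of every clause path $P_j$; include the literal-connecting edges by the rule already described (full connection for satisfied literals, stub only for falsified ones); attach every garbage-collector triangle whose middle vertex of $P_{i,j}$ would otherwise become a degree-$1$ white vertex; and pick a consistent connector pattern in each $H_i$ so that all $n$ variable gadgets lie in one component. Verification splits into three routine checks: $H$ spans $V(G_\varphi)$; $H$ is acyclic (the connector versus disconnector patterns are chosen to avoid closing cycles between already-connected variable gadgets, and garbage collectors only add degree-$1$ appendages); and $H$ is even, witnessed by the explicit $2$-coloring in which $v_i$ inherits its color from $T_i/F_i$, the middle vertex of every length-two gadget path is white, and every remaining leaf is black.

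\emph{Reverse direction.} Let $T^*$ be a spanning even tree of $G_\varphi$ with its unique admissible coloring $\chi$. First, the endpoint of $P_j$ opposite $w_j$ has no neighbor in $G_\varphi$ outside $P_j$, so it is a leaf of $T^*$ and hence $\chi$-black; consequently both edges of $P_j$ lie in $T^*$, making $w_j$ also $\chi$-black. Second, since $G_\varphi$ is attached to $G_i$ only through $v_i$, the restriction $T^* \cap G_i$ is a spanning tree of $G_i$ in which every leaf other than possibly $v_i$ is a leaf of $T^*$ and hence $\chi$-black; a case analysis of the six-vertex gadget $G_i$ shows that the only such spanning trees are $T_i$ and $F_i$, and I define $\alpha(x_i)$ accordingly. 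Third, for each clause $c_j$, the vertex $w_j$ must have at least one $T^*$-edge leaving $P_j$, for otherwise $P_j$ would form its own component. If this edge has the form $\{v_i, w_j\}$ (a positive literal $x_i \in c_j$), then $\chi(v_i) \ne \chi(w_j)$ forces $\alpha(x_i) = \texttt{true}$ via step two, satisfying $c_j$. Otherwise every external $T^*$-edge of $w_j$ leads to the middle vertex $m$ of some path $P_{i,j}$ for a negative literal $\neg x_i$; the remaining neighbors of $m$ are its garbage-collector triangle (a dead end) and $v_i$, so for the clause side of $T^*$ to reach the variable-gadget side, at least one such $m$ must satisfy $\{v_i, m\} \in T^*$, whence $\chi(v_i) = \chi(w_j)$ forces $\alpha(x_i) = \texttt{false}$ and satisfies $\neg x_i$ in $c_j$.

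The main obstacle I anticipate is the connectivity-and-color argument in step three of the reverse direction: the ``dead-end'' step that rules out $w_j$'s external $T^*$-edges all terminating at middle vertices from which the tree cannot reach the corresponding $v_i$, together with the color-forcing step that converts the existence of a usable edge into a truth value for $x_i$. A secondary technicality is verifying step two, where one must enumerate the spanning trees of the six-vertex gadget $G_i$ and check that those different from $T_i$ and $F_i$ necessarily produce some non-$v_i$ vertex as a $\chi$-white leaf.
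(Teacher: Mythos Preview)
Your proposal follows the paper's approach, and the reverse direction is correct (in fact more explicit than the paper, which compresses the parity/color argument into a single sentence). There is, however, a real gap in your forward-direction acyclicity check.

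You assert that $H$ is acyclic because ``the connector versus disconnector patterns are chosen to avoid closing cycles between already-connected variable gadgets.'' But cycles can arise that never pass through a connector gadget. Concretely, if two clauses $c_j$ and $c_k$ each contain the positive literals $x_a$ and $x_b$, and $\alpha(x_a)=\alpha(x_b)=\texttt{true}$, then your rule ``full connection for satisfied literals'' puts all four edges $\{w_j,v_a\},\{w_j,v_b\},\{w_k,v_a\},\{w_k,v_b\}$ into $H$, creating the $4$-cycle $w_j\,v_a\,w_k\,v_b\,w_j$. No $H_i$ lies on this cycle, so no choice of disconnector pattern breaks it. The paper avoids this by adding each literal edge of $P_{i,j}$ \emph{only if it does not close a cycle in the subgraph built so far}; since $\alpha$ is satisfying, at least one literal per clause survives this greedy filter, and the garbage collectors exist precisely so that dropping the $w_j$-side edge of a length-$2$ path $P_{i,j}$ does not leave its middle vertex as a white leaf. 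Once you incorporate this greedy edge-selection (which is indeed part of ``the recipe sketched in the prose''), the rest of your argument goes through unchanged.
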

\begin{proof}
    From a satisfying assignment $\alpha$ of $\varphi$, we construct a subgraph $T$ of $G_\varphi$ as follows.
    For each variable~$x_i$, the subgraph $T$ contains $T_i$ if $\alpha(x_i) = \texttt{true}$ and contains $F_i$ otherwise.
    Moreover $T$ contains $P_j$ for all clauses $c_j$.
    For a variable~$x_i$ and a clause $c_j$ containing either $x_i$ or $\neg x_i$, 
    $T$ contains the edge of $P_{i, j}$ incident to $v_i$ as long as the addition does not make a cycle.
    Moreover, we add the other edge of $P_{i, j}$ to $T$ if $c_j$ contains $\neg x_i$ (that is, $P_{i, j}$ consists of two edges), $\alpha(x_i) = \texttt{false}$, and there is no path between these end vertices in $T$.
    This ensures that each clause gadget $P_j$ is connected to $v_i$ for some $i$ by a path $P_{i, j}$ as at least one literal of $c_j$ has to be $\texttt{true}$.
    For the internal vertex of $P_{i, j}$, we add two edges of the attached garbage collector gadget in such a way that white leaves never appear.
    Finally, for each $1 \le i < n$, we add one of the six subgraphs of connector gadget $H_i$ in such a way that if $a_i$ and $b_i$ belong to different connected components of $T$, we use the consistent connector pattern and otherwise we use the consistent disconnector pattern.
    Due to its construction, $T$ has no cycles and every leaf of $T$ is colored in black, yielding that $T$ is a spanning even tree of $G_\varphi$.

    Conversely, suppose that $G_\varphi$ has a spanning even tree $T$.
    Since each $v_i$ is a cut vertex in $G_\varphi$, the subgraph of $T$ induced by the vertices of variable gadget $G_i$ is a spanning tree $T$ of $G_i$.
    As the two vertices of degree~$1$ in $G_i$ must be leaves in this spanning tree, this spanning tree forms either $T_i$ or $F_i$.
    Now, we define a truth assignment $\alpha$ in such a way that $\alpha(x_i) = \texttt{true}$ if the spanning tree induced by the vertices of $G_i$ is $T_i$; otherwise $\alpha(x_i) = \texttt{false}$.
    This assignment is a satisfying assignment of $\varphi$ as each clause gadget $P_j$ is connected to $v_i$ for some $i$ due to the connectivity of $T$, which means that $\alpha(x_i) = \texttt{true}$ if $c_j$ contains $x_i$ (as a positive literal) and $\alpha(x_i) = \texttt{false}$ if $c_j$ contains $\neg x_i$.
\end{proof}

The reduction also shows that the problem is NP-complete even on planar graphs.
To see this, we give a reduction from \prb{Planar 3SAT}, which is known to be NP-complete~\cite{Lichtenstein82Planar}.
\prb{Planar 3SAT} is a special case of \prb{3SAT}, where the incidence bipartite graph of an input 3CNF formula is restricted to be planar.
Knuth and Raghunathan observed that \prb{Planar 3SAT} is still NP-complete even if the incident bipartite graph can be embedded into the plane so that all variable vertices are aligned along a straight line and no edge intersects this straight line~\cite{Lichtenstein82Planar,KnuthR92Problem} (see~\cref{fig:planar}~(a) for an illustration). Moreover, by a well-known technique of bounding occurrences of variables in \cite{DahlhausJPSY94,Tippenhauer2016}, the following restricted version of \prb{Planar 3SAT} is NP-complete: (1) each variable appears exactly three times, (2) each clause contains at most three literals, (3) the incidence graph is planar, and (4) all variable vertices are aligned along a straight line and any edge has no intersection with this straight line.
This enables us to construct a planar graph $G_\varphi$ of maximum degree 7 such that $\varphi$ is satisfiable if and only if $G_\varphi$ has a spanning even tree. Moreover, we replace each variable gadget as in \cref{fig:planar}~(b). It is easy to observe that the replaced graph has a spanning even tree if and only if the original graph has. Thus, we have the following.
\begin{figure}
    \centering
    \includegraphics[width=0.8\textwidth]{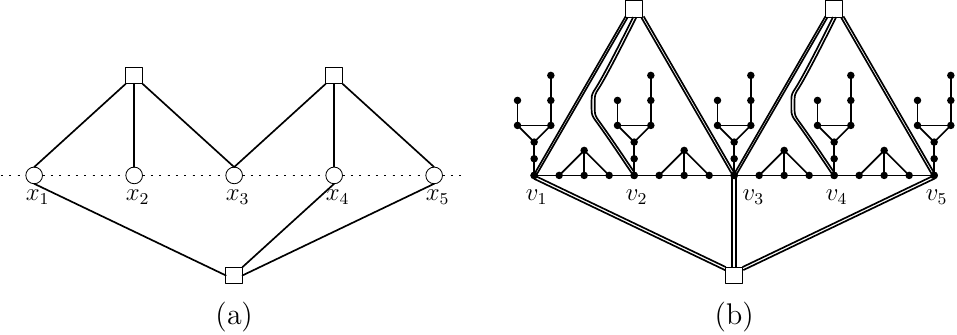}
    \caption{(a) A planar drawing of the incidence graph of an instance $\varphi$ of \prb{Planar 3SAT}.
    Circles and boxes indicate variable vertices and clause vertices, respectively.
    (b) The graph constructed from $\varphi$. For visibility, we omit some parts, which are not relevant in the construction.}
    \label{fig:planar}
\end{figure}

\begin{corollary}\label{cor:hardness-planar}
    The problem of deciding whether $G$ has a spanning even tree is NP-complete, even when the input graph $G$ is a planar graph of maximum degree 6.
\end{corollary}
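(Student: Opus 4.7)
My plan is to adapt the reduction from \cref{lem:hardness} to a restricted variant of \prb{Planar 3SAT}. I would start from the version established by Knuth and Raghunathan~\cite{KnuthR92Problem}, in which the incidence graph admits a planar embedding with all variable vertices placed on a horizontal line and no edge crossing this line. By combining this with the standard occurrence-bounding technique of Dahlhaus et al.~\cite{DahlhausJPSY94} and Tippenhauer~\cite{Tippenhauer2016}, I may further assume that every variable appears in exactly three clauses and every clause contains at most three literals. This restricted variant remains NP-complete and will serve as the source of the reduction.

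Next, I would realize the gadget construction of \cref{lem:hardness} geometrically: place the variable gadgets $G_1, \ldots, G_n$ along the horizontal line in the order prescribed by the embedding, chain them with the connector gadgets $H_i$ along this line, and route each path $P_{i,j}$ (together with its attached triangle garbage collector) through the upper or lower half-plane according to where the corresponding clause vertex sits in the planar embedding. Because no edge of the incidence graph crosses the line of variable vertices, the routings in the two half-planes do not interfere, and the clause gadgets $P_j$ can be placed at the endpoints of these routes. Thus the resulting graph $G_\varphi$ embeds in the plane, and the biconditional between satisfiability of $\varphi$ and existence of a spanning even tree of $G_\varphi$ follows by the same argument as in \cref{lem:hardness}.

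The main obstacle is controlling the degree at each connection vertex $v_i$: in the original construction, $v_i$ already carries two internal edges of $G_i$ plus up to two connector-gadget edges, and attaching three literal paths $P_{i,j}$ pushes $\deg(v_i)$ to $7$. To descend to the claimed bound $6$, I would replace each variable gadget by the modified gadget depicted in \cref{fig:planar}(b), which splits the single connection vertex into several lower-degree copies whose internal structure forces them to share a common color in any admissible $2$-coloring of any spanning even tree, thereby preserving the one-to-one correspondence between spanning even trees of the gadget and truth values of $x_i$. The hard part is verifying this local equivalence: one direction asks that, for each of the two "truth" patterns of the original gadget, there is a compatible spanning subtree of the replacement in which every copy of the connection vertex receives the prescribed color; the other direction asks that every spanning even tree of the replacement induces a consistent color on all copies. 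Once this local equivalence is established, the correctness of \cref{lem:hardness} transfers verbatim, planarity is preserved because the replacement is local to a planar piece, and a case analysis of the replaced gadget together with the connector and literal attachments confirms $\Delta(G_\varphi) \le 6$. NP membership is inherited from \cref{thm:hardness}, completing the proof.
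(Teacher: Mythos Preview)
Your proposal is correct and follows essentially the same approach as the paper: reduce from the Knuth--Raghunathan layout of \prb{Planar 3SAT}, apply the occurrence-bounding trick of~\cite{DahlhausJPSY94,Tippenhauer2016} to force exactly three occurrences per variable, observe that the straight reduction yields a planar graph of maximum degree~$7$ (the bottleneck being the connection vertex $v_i$), and then locally replace each variable gadget by the modified gadget of \cref{fig:planar}(b) to bring the degree down to~$6$. Your write-up is in fact more detailed than the paper's, which dispatches the gadget-replacement step with a single ``it is easy to observe'' sentence.
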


\section{Polynomial-time algorithm for graph classes}

To circumvent the intractability shown in \cref{thm:hardness}, we consider several graph classes and devise polynomial-time algorithms for them.

\subsection{Cographs}

It is known that every bipartite regular graph has no spanning even trees~\cite{JacksonY24}.
This implies that the balanced complete bipartite graph $K_{t, t}$, which is also a cograph, has no spanning even tree.
The following theorem proves that excepting these graphs, every connected cograph has a spanning even tree.

\begin{theorem}\label{thm:cograph}
    Every connected cograph $G$ has a spanning even tree except for the balanced complete bipartite graph $K_{t, t}$ for all $t$.
\end{theorem}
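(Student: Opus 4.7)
The plan is to exploit the recursive structure of cographs. Since $G$ is connected with at least two vertices, I would write $G = G_1 \Join G_2$ for two vertex-disjoint cographs $G_1$ and $G_2$, and set $V_i = V(G_i)$. The overall strategy is to build a spanning tree out of join edges between $V_1$ and $V_2$ (plus at most one internal edge of some $G_i$), arranged so that the admissible $2$-coloring of \cref{obs:bichromatic-spanning-tree} places every leaf in a single color class.

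The first case I would handle is the \textbf{unbalanced case} $|V_1| \neq |V_2|$, say $|V_1| \geq |V_2| + 1$. I would pick any $v \in V_1$ and build the tree from the $|V_2|$ join edges incident to $v$ together with one join edge $\{u, \phi(u)\}$ for each $u \in V_1 \setminus \{v\}$, where $\phi \colon V_1 \setminus \{v\} \to V_2$ is any surjection (which exists since $|V_1| - 1 \geq |V_2|$). This has $|V(G)| - 1$ edges and is clearly connected, hence a spanning tree. Coloring $v$ black forces $V_2$ white and $V_1 \setminus \{v\}$ black; surjectivity of $\phi$ ensures every vertex of $V_2$ has degree at least two, so the leaves are exactly the black vertices of $V_1 \setminus \{v\}$, and the tree is even.

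The second case is the \textbf{balanced case} $|V_1| = |V_2| = t$. If both $G_1$ and $G_2$ are edgeless, then $G = K_{t,t}$, which is excluded by the statement (and, as a bipartite regular graph, has no spanning even tree by~\cite{JacksonY24}). Otherwise some $G_i$ has an edge, which I may assume lies in $G_1$ and call $\{v, v'\}$. I would first delete $v'$, apply the unbalanced construction to $G - v' = (G_1 - v') \Join G_2$ (where $|V_2| = t$ now exceeds $|V_1 \setminus \{v'\}| = t - 1$), and then reattach $v'$ to $v$ along the internal edge of $G_1$. In the tree produced by the first case, $v$ lies on the white layer, so the pendant $v'$ becomes a new black leaf while every previously black leaf remains a leaf; the admissible coloring is preserved and the resulting tree is a spanning even tree of $G$.

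The step I expect to be the main obstacle is the balanced case. A spanning tree using only join edges respects the bipartition $(V_1, V_2)$, and a short degree-counting argument shows that when $|V_1| = |V_2|$ every such bipartite spanning tree must have leaves on both sides, hence leaves of both colors. The fix is to borrow one internal edge of $G_1$ or $G_2$ to absorb the surplus vertex, which is precisely what is unavailable in $K_{t,t}$ and explains why it is the sole exception.
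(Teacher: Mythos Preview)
Your proposal is correct and follows essentially the same route as the paper: both build a depth-two ``double broom'' on the join edges when $|V_1|\neq |V_2|$ (your surjection $\phi$ is exactly the paper's matching-plus-extra-edges description), and both handle the balanced case by deleting one endpoint of an internal edge of some $G_i$, applying the unbalanced construction to the smaller graph, and reattaching the deleted vertex as a black pendant of its (now white) former neighbor. The only cosmetic difference is that the paper singles out the star sub-case $t=2$ explicitly, whereas your construction absorbs it uniformly.
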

\begin{proof}
    As $G$ is connected, there are two cographs $G_1 = (V_1, E_1)$ and $G_2 = (V_2, E_2)$ such that $G = G_1 \Join G_2$.
    
    Suppose first that $G$ is a complete bipartite graph.
    Without loss of generality, we assume that $|V_1| < |V_2|$.
    If $G$ is a star graph, we are done.
    Otherwise, we choose an arbitrary vertex $r \in V_2$.
    As $|V_1| < |V_2|$, there is a matching $M$ in $G - r$ saturating $V_1$.
    We define a tree that contains (1) all edges incident to $r$, (2) the matching $M$ saturating $V_1$, and (3) exactly one edge incident to $v$ for $v \in V_2 \setminus \{r\}$ that is not matched in $M$.
    Clearly, this is a spanning tree of $G$ and, as $|V_2| \ge 2$, all the leaves belong to $V_2$.
    Thus, $G$ has a spanning even tree.

    Suppose otherwise.
    If $|V_1| \neq |V_2|$, we can find a spanning even tree of $G$ since $G$ has $K_{|V_1|, |V_2|}$ as a spanning subgraph.
    Thus, we assume that $|V_1| = |V_2|$.
    We also assume that $V_1$ contains at least one edge~$e = \{u, v\}$.
    As $G - u$ is a connected cograph, either $G - u$ is a star graph with center $v$ or $G - u$ contains a spanning even tree $T$ such that all the leaves belong to $V_2$ and there is a vertex $r \in V_2$ that is not a leaf of $T$.
    If $G - u$ is a star graph, $G$ contains a spanning star with center $v$, and hence we are done.
    Otherwise, $T + e$ is indeed a spanning even tree of $G$.
    
    Therefore, $G$ has a spanning even tree under the assumption that $G \neq K_{t, t}$.
\end{proof}

The above proof immediately turns into a linear-time algorithm for finding a spanning even tree in a connected cograph.

\subsection{Cobipartite graphs}
Similarly to the case of cographs, every connected cobipartite graph has a spanning even tree, with a few exceptions.

\begin{theorem}\label{thm:cobiparite}
    Every connected cobipartite graph $G$ with at least two vertices has a spanning even tree except for the case that $G$ is isomorphic to a complete graph with two vertices, a path of four vertices, or a cycle of four vertices.
\end{theorem}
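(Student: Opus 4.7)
The plan is to follow a case analysis on the sizes of the two cliques $V_1$ and $V_2$ that partition $V(G)$, paralleling the strategy used for cographs in \Cref{thm:cograph}: handle the degenerate and small cases directly, and give an explicit caterpillar construction in the generic case. The three exceptional graphs $K_2$, $P_4$, and $C_4$ each have only spanning trees with exactly two leaves at odd distance, so none of them admits a spanning even tree; it thus remains to establish the positive direction.

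First, if one of $V_1, V_2$ is empty, then $G$ is a clique and the star at any vertex is a spanning even tree whenever $|V(G)| \ge 3$. If $\min(|V_1|,|V_2|) = 1$, say $V_1 = \{w\}$, then $w$ has a cross-neighbor $v \in V_2$ by connectivity, and the star centered at $v$ (formed from $v$'s $V_2$-clique edges together with the cross edge $wv$) is a spanning even tree provided $|V(G)| \ge 3$. Next, assume $|V_1|, |V_2| \ge 2$. The four-vertex case $|V_1| = |V_2| = 2$ is handled by brief enumeration: each such graph either has a universal vertex (whose star is a spanning even tree) or is isomorphic to $P_4$ (one cross edge) or $C_4$ (two non-incident cross edges).

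The main case is $|V(G)| \ge 5$ with $|V_1|, |V_2| \ge 2$. By swapping the roles of $V_1$ and $V_2$ if necessary, we may assume $|V_2| \ge 3$. Fix any cross edge $uv$ with $u \in V_1$ and $v \in V_2$, and build the tree $T$ rooted at $u$ as follows: attach every vertex of $V_1 \setminus \{u\}$ as a child of $u$ via the corresponding clique edges; attach $v$ as a child of $u$ via the cross edge $uv$; extend $v$ by a $V_2$-clique path $v = v_0, v_1, \ldots, v_k$; and attach each remaining vertex of $V_2$ as a child of $v_k$ via a $V_2$-clique edge. Set $k := |V_2| - 1$ (so the path uses all of $V_2$ and there is no fan) when $|V_2|$ is odd, and $k := 1$ (so the fan at $v_1$ has size $|V_2| - 2 \ge 2$) when $|V_2|$ is even. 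A direct depth computation then shows that every leaf of $T$ lies at an odd depth from $u$: the vertices of $V_1 \setminus \{u\}$ are leaves at depth $1$, while the remaining leaves are the endpoint $v_k$ at depth $|V_2|$ (when $|V_2|$ is odd) or the fan vertices at depth $3$ (when $|V_2|$ is even). Coloring each vertex of $T$ black if its depth is odd and white otherwise gives an admissible $2$-coloring, so $T$ is even by \Cref{obs:bichromatic-spanning-tree}.

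The main technical obstacle is the parity bookkeeping: one must track the leaf depths in the two regimes of $k$ and verify that the preliminary $V_1 \leftrightarrow V_2$ swap always makes $|V_2| \ge 3$ under the hypothesis $|V(G)| \ge 5$. Once these are set up, the construction uniformly yields a spanning even tree for every connected cobipartite graph outside the three exceptional cases.
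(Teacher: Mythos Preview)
Your proof is correct, and the overall case split on $(|V_1|,|V_2|)$ matches the paper's. The one genuine difference is in the main case $|V_1|\ge 2$, $|V_2|\ge 3$. The paper avoids your parity split on $|V_2|$ entirely: it takes a cross edge $ru$ with $r\in V_2$, $u\in V_1$, roots the tree at $r$, makes $u$ and some $v\in V_2\setminus\{r\}$ the two children of $r$, and then attaches every remaining vertex of $V_1$ under $u$ and every remaining vertex of $V_2$ under $v$. Because $|V_1|\ge 2$ and $|V_2|\ge 3$, both $u$ and $v$ receive at least one child, so every leaf lies at depth exactly~$2$ and the tree is even without further cases. Your caterpillar construction with the odd/even choice of $k$ is equally valid, but it trades a one-line uniform argument for two sub-cases and the associated depth bookkeeping; the paper's depth-$2$ double star buys that uniformity at no extra cost.
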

\begin{proof}
    Let $G = (V, E)$ be a connected cobipartite graph with at least two vertices.
    Let $V_1$ and $V_2$ be disjoint cliques in $G$ with $V_1 \cup V_2 = V$.
    Without loss of generality, we assume that $|V_2| \ge |V_1| \ge 1$.
    
    Suppose that $|V_1| = 1$.
    Then, $G$ has a spanning tree that is isomorphic to a star graph with at least two leaves, which is clearly even, under the assumption that $G \neq K_2$.
    
    Suppose that $|V_1| \ge 2$ and $|V_2| \ge 3$.
    Let $r \in V_2$ be an arbitrary vertex that is adjacent to a vertex $u$ in $V_1$.
    We construct a (roooted) spanning tree $T$ by adding $u$ and an arbitrary $v$ as children of $r$ and adding all the remaining vertices of $G$ as children of either $u$ or $v$.
    This can be done as all the vertices in $V_1$ are adjacent to $u$ and all the vertices in $V_2$ are adjacent to $v$.
    Since $|V_1| \ge 2$ and $|V_2| \ge 3$, both $u$ and $v$ have at least one child.
    As the depth of every leaf of $T$ is exactly $2$, $T$ is even.

    Now, the remaining case is $|V_1| = |V_2| = 2$.
    Since $G$ is isomorphic to neither a path of four vertices nor a cycle of four vertices, it must have a vertex of degree~$3$, which contains a star graph with three leaves as a subgraph.
\end{proof}

\subsection{Unit interval graphs}

In this subsection, we observe that every connected unit interval graph $G$ has a spanning even tree except for the case where $G$ is isomorphic to a path of odd length.

It is known that every connected unit interval graph has a Hamiltonian path~\cite{Bertossi83Finding}.
Such a Hamiltonian path can be found as follows.
Let $G$ be a connected unit interval graph.
We consider a unit interval representation of $G$ and the vertices of $G$ can be ordered from left to right with respect to this representation.
This vertex ordering $v_1, v_2, \ldots, v_n$ induces a Hamiltonian path of $G$, since two consecutive vertices $v_i$ and $v_{i+1}$ are not adjacent only if every interval corresponding to a vertex in $\{v_1, \ldots, v_i\}$ has no intersection with any interval corresponding to a vertex in $\{v_{i+1}, \ldots, v_n\}$, which implies that $G$ is disconnected.

We now turn to our problem.
If $G$ has an odd number of vertices, then $G$ admits a spanning even tree, as it has a Hamiltonian path of even length.
Suppose otherwise.
Since $G$ is not isomorphic to a path, $G$ has a vertex $v$ of degree at least three.
Moreover, as $G$ is claw-free~\cite{Roberts69}, $v$ is contained in a clique of size at least three.
Consider a unit interval representation of $G$.
Let $u_1, u_2, u_3$ be vertices of the clique that are consecutively ordered from left to right in the representation.
Since $u_1$ and $u_3$ are adjacent in $G$, $G - u_2$ is also connected.
Thus, $G - u_2$ has a Hamiltonian path of even length that traverses the unit interval representation induced by $G - u_2$ from left to right.
As $u_1$ and $u_3$ are adjacent in this Hamiltonian path, we can construct a spanning even tree of $G$, by attaching $u_2$ to $u_1$ or $u_3$.

\begin{theorem}\label{thm:unit-interval}
    Every connected unit interval graph has a spanning even tree, excepting that it is isomorphic to a path of odd length. 
\end{theorem}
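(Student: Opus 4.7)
The plan is to leverage the known fact~\cite{Bertossi83Finding} that every connected unit interval graph admits a Hamiltonian path. A Hamiltonian path is itself a spanning tree whose only leaves are its two endpoints, so it is even precisely when it has even length. I would therefore split on the parity of $|V(G)|$.

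When $|V(G)|$ is odd, any Hamiltonian path has an even number of edges and immediately provides a spanning even tree, so the nontrivial case is when $|V(G)|$ is even. In this case, if $G$ is not the excepted path of odd length, then in particular $G$ is not a path at all, so $G$ contains a vertex $v$ of degree at least~$3$. Because unit interval graphs are claw-free~\cite{Roberts69}, the neighborhood of $v$ spans an edge, so $v$ lies in a triangle. Fixing a unit interval representation, I would then pick three vertices $u_1, u_2, u_3$ of a triangle that are consecutive in the left-to-right order. Consecutiveness ensures that the edge $u_1 u_3$ persists in $G - u_2$, so $G - u_2$ is a connected unit interval graph of odd order. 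Applying the odd case to $G - u_2$ yields a Hamiltonian path $P$ of even length, and because such a path can be read off from the left-to-right ordering of the intervals, $u_1$ and $u_3$ appear consecutively in $P$.

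The main technical step is to reattach $u_2$ as a pendant without destroying evenness. The candidate leaves of the resulting tree are the two endpoints of $P$ (which share a color since $P$ has even length) together with $u_2$, minus any of $u_1, u_3$ that is now promoted to degree at least~$2$. A short parity computation on the position of $u_1$ along $P$ shows that exactly one of the two possible attachments, to $u_1$ or to $u_3$, places $u_2$ in the common color of the endpoints; the same analysis also handles the degenerate situation where $u_1$ or $u_3$ happens to be an endpoint of $P$. Selecting that attachment yields the desired spanning even tree of $G$.
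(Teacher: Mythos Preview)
Your proposal is correct and follows the paper's proof almost exactly: the same parity split on $|V(G)|$, the same appeal to claw-freeness to locate a triangle, the same deletion of the middle vertex $u_2$ to reduce to the odd-order case, and the same reattachment of $u_2$ to one of $u_1,u_3$. You spell out the parity choice in the final step in slightly more detail than the paper does, but there is no substantive difference in approach.
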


By a linear-time algorithm for computing a unit interval representation~\cite{CorneilKNOS95}, we can decide if $G$ has a spanning even tree in linear time and construct it if exists.

\subsection{Split graphs}
Let $G = (I \cup K, E)$ be a connected split graph such that $I$ is a maximal independent set and $K$ is a clique of~$G$.
Note that, by the maximality of $I$, each vertex in $K$ has at least one neighbor in $I$.
Let $I^{(1)} \subseteq I$ be the set of degree-$1$ vertices in $G$ and let $K^{(1)} = N_G(I^{(1)})$.
As $I$ is an independent set of $G$, we have $K^{(1)} \subseteq K$.
Suppose first that $K^{(1)} = K$.
If $K$ contains at most one vertex, we can easily construct a spanning even tree of $G$, except for the case that $G$ is isomorphic to the complete graph of two vertices.
Thus, in the following, we assume that $K$ has at least two vertices.
The following lemma characterizes the existence of a spanning even tree of $G$ in this case, which readily turns into a linear-time algorithm for finding a spanning even tree.

\begin{lemma}\label{lem:split:K_1=K}
    Suppose that $K^{(1)} = K$ and $|K| \ge 2$.
    Let $\hat{G}$ be the bipartite graph obtained from $G$ by removing all edges between vertices in $K$.
    Then, $G$ has a spanning even tree if and only if $\hat{G}$ is connected.
\end{lemma}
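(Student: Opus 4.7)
The plan is to exploit the admissible 2-coloring guaranteed by \cref{obs:bichromatic-spanning-tree} to show that, under the hypothesis $K^{(1)} = K$, every vertex of $K$ is forced to receive the same color in any spanning even tree.

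For the forward direction, suppose $G$ has a spanning even tree $T$, and let $c$ be its admissible coloring. Any vertex in $I^{(1)}$ has degree~$1$ in $G$, so it is a leaf of $T$ and hence colored black. By assumption, every $k \in K = K^{(1)}$ has some neighbor in $I^{(1)}$, which is its neighbor in $T$ as well (it is the only edge incident to that $I^{(1)}$-vertex in $G$), so $k$ is colored white. A proper 2-coloring forbids edges between two white vertices, so $T$ uses no edge within $K$; hence $T \subseteq \hat G$, and $\hat G$ is connected.

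For the converse, suppose $\hat G$ is connected. First observe that $I \setminus I^{(1)} \neq \emptyset$: since $\hat G$ is bipartite with parts $I$ and $K$, and each vertex of $I^{(1)}$ has degree $1$ in $\hat G$, a connected $\hat G$ with $|K| \ge 2$ must contain some $I$-vertex of degree at least $2$, which then lies in $I \setminus I^{(1)}$. Now consider $\hat G' := \hat G - I^{(1)}$. Removing degree-$1$ vertices preserves connectivity, so $\hat G'$ is a connected bipartite graph with parts $K$ and $I \setminus I^{(1)}$, each nonempty. Let $T'$ be any spanning tree of $\hat G'$, and let $\hat T$ be the tree obtained from $T'$ by adding, for each $v \in I^{(1)}$, its unique edge to $K$. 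Then $\hat T$ is a spanning tree of $\hat G$, and every $k \in K$ has degree at least $1$ in $T'$ (as $T'$ has $\ge 2$ vertices and is connected) plus at least one edge to $I^{(1)}$, so $\deg_{\hat T}(k) \ge 2$. Therefore every leaf of $\hat T$ lies in $I$, and since $\hat T$ is a subtree of the bipartite graph $\hat G$ with bipartition $(I, K)$, its unique proper 2-coloring assigns all of $I$ one color, so all leaves share that color. By \cref{obs:bichromatic-spanning-tree}, $\hat T$ is an even spanning tree of $\hat G$, and hence of $G$.

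No step looks genuinely hard; the only subtlety is making sure $\hat G'$ is nonempty and connected, which relies on the combination of $|K| \ge 2$, bipartiteness of $\hat G$, and the fact that $I^{(1)}$-vertices are leaves of $\hat G$ so their removal cannot disconnect anything.
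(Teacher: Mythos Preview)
Your proof is correct and follows essentially the same approach as the paper's. The forward direction is identical; for the converse, the paper simply takes an arbitrary spanning tree of $\hat G$ and argues directly that every $K$-vertex is internal (since each is adjacent to an $I^{(1)}$-leaf and, as $|K|\ge 2$, must also connect to the rest of the tree), whereas you make this explicit by first deleting $I^{(1)}$, spanning the remainder, and reattaching the pendants. Both arguments rest on the same key observation that the admissible coloring forces all of $K$ white.
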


\begin{proof}
    Let $T$ be a spanning even tree of $G$.
    Then, every vertex in $I^{(1)}$ is a leaf in $T$.
    As $K^{(1)} = K$, all vertices in $K$ are colored in white in the admissible coloring of any spanning even tree, and hence $K$ is an independent set of $T$.
    This implies that $T$ is also a spanning tree of $\hat{G}$.
    Thus, $\hat{G}$ is connected.

    Conversely, suppose that $\hat{G}$ is connected.
    Let $T$ be an arbitrary spanning tree of $\hat{G}$.
    As all vertices in $I^{(1)}$ are leaves in $T$ and $|K| \ge 2$, the vertices in $K = N_G(I^{(1)}) = N_{\hat{G}}(I^{(1)})$ are all internal vertices.
    As $K$ is an independent set of $\hat{G}$, the vertices in $T$ can be properly colored in such a way that the vertices in $K$ are colored in white and those in $I$ are colored in black.
    This proper coloring shows that $T$ is a spanning even tree of $\hat{G}$, and hence in $G$.
\end{proof}

Suppose next that $K^{(1)} \neq K$.
In this case, we show that $G$ always has a spanning even tree.
\begin{lemma}\label{lem:split:K_1!=K}
    Suppose that $K^{(1)} \neq K$.
    Then, $G$ has a spanning even tree.
\end{lemma}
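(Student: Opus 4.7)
The plan is to construct a spanning even tree explicitly via a maximum matching in an auxiliary bipartite graph. First I would dispose of the degenerate case $|K|=1$: in that case, connectivity of $G$ and the maximality of $I$ would force every $v \in I$ to be a degree-$1$ neighbor of $r$, putting $r$ into $K^{(1)}$ and contradicting the choice $r \in K \setminus K^{(1)}$. So I may assume $|K| \ge 2$.

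The main construction uses the bipartite graph $H$ on parts $I$ and $K' := K \setminus \{r\}$ inheriting the edges of $G$. Because the degree-$1$ $I$-neighbors of distinct vertices of $K^{(1)}$ are distinct, $H$ admits a matching that saturates $K^{(1)}$; extend it to a maximum matching $M^*$ of $H$, and let $K_M \subseteq K'$ be the set of $M^*$-saturated vertices and $U := K' \setminus K_M$. I would then establish two claims. (i)~$K_M \neq \emptyset$: otherwise $H$ is edgeless, so by connectivity every $v \in I$ has $r$ as its only neighbor, making $v$ a degree-$1$ vertex and placing $r$ in $K^{(1)}$, a contradiction. (ii)~Every $v \in I$ has a $G$-neighbor in $K_M$. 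This is immediate for $v \in I^{(1)}$ (its unique neighbor lies in $K^{(1)} \subseteq K_M$) and for matched $v \in I \setminus I^{(1)}$ (its partner lies in $K_M$); for an unmatched $v \in I \setminus I^{(1)}$, the maximality of $M^*$ precludes an augmenting edge incident to $v$, so every $H$-neighbor of $v$ is $M^*$-saturated, i.e.\ lies in $K_M$, and at least one such neighbor exists because $\deg_G(v) \ge 2$ rules out $r$ being $v$'s only $K$-neighbor.

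Using these two claims I construct a spanning tree $T$ rooted at $r$ by assigning each non-root a parent: every $k \in K_M$ gets parent $r$, every $u \in U$ gets parent a fixed $k_0 \in K_M$, and every $v \in I$ gets parent $M^*(v)$ when matched and otherwise an arbitrary $K_M$-neighbor of $v$. The first two assignments use clique edges inside $K$, and the third uses Claim~(ii), so $T$ is indeed a spanning tree. Coloring $A := I \cup \{r\} \cup U$ black and $B := K_M$ white gives a proper $2$-coloring of $T$, and since each $k \in K_M$ has its matched partner $M^*(k) \in I$ as a child, every vertex of $B$ is internal; hence the leaves of $T$ are confined to $U \cup I \subseteq A$ and are all black, so $T$ is a spanning even tree by \Cref{obs:bichromatic-spanning-tree}.

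The main obstacle I expect is choosing the bipartition correctly. The naive attempt $A = I \cup \{r\}$, $B = K \setminus \{r\}$ asks for a matching saturating all of $K \setminus \{r\}$, which can fail Hall's condition when $|I|$ is small compared to $|K|$ (for instance, when $G$ is essentially a clique with a single extra $I$-vertex). Moving the unmatched vertices $U$ from the white side into the black side $A$ is the essential fix, and the augmenting-path characterization of a maximum matching is precisely what guarantees that every $I$-vertex still has a neighbor in the shrunken white side $K_M$, so no white leaves can appear.
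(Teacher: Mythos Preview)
Your proof is correct. Both your argument and the paper's build the tree rooted at some $r \in K \setminus K^{(1)}$, color $r$ black, color a chosen subset of $K \setminus \{r\}$ white (each certified internal by a private $I$-child), and push the remaining $K$-vertices to the black side as grandchildren of $r$. The paper first carves out $K' = K^{(1)} \cup (N_G(I^{(2)}) \setminus \{r\})$ as the ``forced white'' vertices---those with a degree-$1$ or degree-$2$ $I$-neighbor that cannot be attached anywhere else---and only afterwards runs a maximal matching on the residual bipartite graph $\tilde K \cup \tilde I$, which triggers a short case analysis ($\tilde K$ empty, $\tilde I$ empty, both nonempty). You instead run a single matching on all of $I$ versus $K\setminus\{r\}$ and let it determine the white set $K_M$; requiring $K^{(1)} \subseteq K_M$ (in fact any maximal matching already guarantees this, since each $k \in K^{(1)}$ has a private degree-$1$ neighbor) is exactly what makes Claim~(ii) succeed for $I^{(1)}$. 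The payoff of your formulation is that the paper's three cases collapse into one uniform construction; the paper's version, in exchange, never needs more than a maximal matching and makes the forced colors more visible.
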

\begin{proof}
    Let $r \in K\setminus K^{(1)}$ be arbitrary and let $I^{(2)} = \{v \in N_G(r) \cap I : d_G(v) = 2\}$ be the set of vertices in $I$ that have exactly two neighbors in $K$, one of which is $r$.
    Let $K' = K^{(1)} \cup N_G(I^{(2)}) \setminus \{r\}$ and let $I' = N_G(K') \cap I$.
    Since $K'$ contains $K^{(1)} = N_G(I^{(1)})$ and the neighbors of $I^{(2)}$ other than $r$, we have $I^{(1)} \cup I^{(2)} \subseteq I'$.
    Let $\tilde{K} = K \setminus (K' \cup \{r\})$ and let $\tilde{I} = I \setminus I'$.
    See~\Cref{fig:split} for an illustration.
    \begin{figure}
        \centering
        \includegraphics[width=0.6\textwidth]{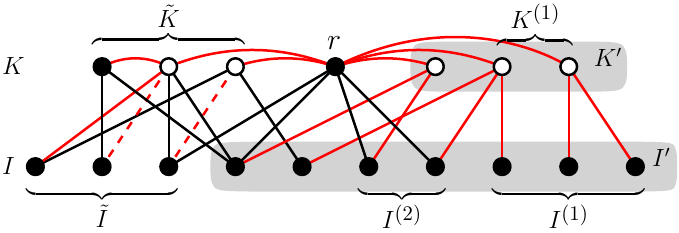}
        \caption{The figure illustrates the construction of a spanning even tree $T$ rooted at $r$. The edges in $T$ are represented by red lines and those in a maximal matching $M$ of $\tilde G$ by dashed red lines.}
        \label{fig:split}
    \end{figure}
    Clearly, every vertex in $\tilde{I}$ has no neighbor in $K'$.
    Moreover, as $G$ is connected and $\tilde I \cap I^{(1)} = \emptyset$, each vertex in $\tilde{I}$ has a neighbor in $\tilde K$.
    
    Now, we construct a (rooted) spanning tree $T$ of $G$ as follows.
    We take $r$ as its root and color it in black.
    For each vertex in $K'$, we add it as a child of $r$, which is colored in white.
    By the definition of $K'$, each vertex $u \in K'$ has a neighbor $v$ in $I'$ with $N_G(v) = \{u\}$ or $N_G(v) = \{u, r\}$.
    We add $v$ as a leaf child of $u$, and then all vertices in $K'$ become internal in $T$.
    Each remaining vertex in $I'$ is also added as a child of an arbitrary adjacent vertex in $K'$.
    All vertices in $I'$ are colored in black.
    If $\tilde K$ is empty, $\tilde I$ is also empty.
    Thus, the tree constructed so far is a spanning even tree of $G$ in this case.
    In the following, we assume that $\tilde K$ is nonempty.
    
    Suppose that $\tilde I$ is empty.
    If $K'$ contains at least one vertex $v$, which is already colored in white, then we add all the vertices of $\tilde K$ as black children of $v$, and hence we are done.
    Otherwise, i.e. $K' = \emptyset$, observe that $\tilde K$ contains at least two vertices, as otherwise the unique vertex in it belongs to $K^{(1)}$ or $N(I^{(2)})$, contradicting the fact that $\tilde K \cap K' = \emptyset$.
    Thus, we take an arbitrary vertex $v \in \tilde K$ and add it as a white child of $r$, and then all other vertices in $\tilde K$ are added as black children of $v$.
    As $\tilde K$ has at least two vertices, $v$ becomes an interval vertex of $T$, and hence we are done in this case.
    
    Suppose that $\tilde I$ is not empty.
    Let $\tilde G$ be the bipartite subgraph with vertex set $\tilde K \cup \tilde I$ and edge set $\{\{u, v\} \in E : u \in \tilde K, v \in \tilde I\}$.
    Let $M$ be an arbitrary maximal matching of $\tilde G$.
    Since each vertex in $\tilde I$ has no neighbor in $K'$ and at least two neighbors in $K$, it has a neighbor in $\tilde K$.
    This implies that $M$ is nonempty.
    For each vertex $u$ in $\tilde{K}$ that is matched in $M$, we add it as a white child of $r$ and add the other matched vertex $v \in \tilde{I}$ as a block child of $u$.
    Thus, all the matched vertices in $\tilde{K}$ are an internal vertex in $T$.
    For unmatched vertices in $\tilde K \cup \tilde I$, we add them as black children of a matched vertex in $\tilde K$.
    This can be done as $M$ is a (nonempty) maximal matching of $\tilde G$.
    The above construction confirms that $G$ always has a spanning even tree when $K^{(1)} \neq K$.
\end{proof}

The above two lemmas prove the following theorem.

\begin{theorem}\label{thm:split}
    There is a linear-time algorithm for deciding whether a given connected split graph has a spanning even tree.
    Moreover, the algorithm constructs a spanning even tree if the answer is affirmative.
\end{theorem}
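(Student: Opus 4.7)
My plan is to assemble a linear-time algorithm out of \cref{lem:split:K_1=K} and \cref{lem:split:K_1!=K}, handling a couple of degenerate cases separately. Using the algorithm of~\cite{HammerS81}, I would compute a split partition $V = I \cup K$ in linear time and then identify $I^{(1)}$ and $K^{(1)} = N_G(I^{(1)})$ by one pass over the adjacency lists. If $|K|\le 1$, the graph is an isolated vertex, a $K_2$, or a star, and the answer can be read off directly.

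Assume $|K|\ge 2$. If $K^{(1)} = K$, I would build the bipartite graph $\hat G$ implicitly (by ignoring edges inside $K$) and test its connectivity by BFS or DFS in $O(|V|+|E|)$ time. By \cref{lem:split:K_1=K}, $G$ admits a spanning even tree iff $\hat G$ is connected, in which case any BFS/DFS tree of $\hat G$ is the desired spanning even tree.

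If $K^{(1)} \neq K$, then \cref{lem:split:K_1!=K} guarantees existence, so what remains is to turn its proof into an explicit construction. I would pick any $r \in K \setminus K^{(1)}$; compute $I^{(2)}$, $K'$, $I'$, $\tilde K$, and $\tilde I$ by a constant number of linear scans of the adjacency lists; attach each vertex of $K'$ together with a chosen leaf witness in $I'$ to $r$; dispatch the subcases $\tilde K = \emptyset$, $\tilde I = \emptyset$ (with or without $K' = \emptyset$) directly as in the lemma; and, when $\tilde I \neq \emptyset$, produce a maximal (not maximum) matching $M$ of $\tilde G$ by a single greedy pass over the edges incident to $\tilde I$, then attach matched and unmatched vertices of $\tilde K \cup \tilde I$ according to the prescription of \cref{lem:split:K_1!=K}.

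The main point that requires care is bookkeeping: ensuring that each of $I^{(1)}, K^{(1)}, I^{(2)}, K', I', \tilde K, \tilde I$, together with a leaf witness in $I'$ for every $u \in K'$, can be computed and queried in linear total time using standard adjacency-list representations, and that the greedy matching sweep does not repeat or miss vertices. This is routine, and correctness of the whole algorithm is inherited directly from the two lemmas, so the combined procedure runs in $O(|V|+|E|)$ time.
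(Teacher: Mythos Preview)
Your proposal is correct and follows exactly the paper's approach: the theorem is obtained directly by combining \cref{lem:split:K_1=K} and \cref{lem:split:K_1!=K} after computing a split partition, and your additional implementation details (BFS/DFS on $\hat G$, greedy maximal matching on $\tilde G$) are just the natural linear-time realizations that the paper leaves implicit. One small point to make explicit is that the partition should have $I$ \emph{maximal} (as the paper assumes, so that every vertex of $K$ has a neighbor in $I$), but this is easily enforced in linear time after invoking~\cite{HammerS81}.
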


\subsection{Block graphs}
Let $G = (V, E)$ be a connected block graph with at least three vertices.
Let $\mathcal B$ be the set of maximal cliques in $G$ and let $\mathcal T$ be the block-cut tree of $G$.
Before diving into the detail, we first observe the following simple fact.
\begin{observation}\label{obs:block:maximal-clique}
    Suppose that $G$ is a connected block graph that contains at least one edge.
    Let $T$ be an arbitrary spanning tree in $G$.
    Then, for every maximal clique $B$ in $G$, $T$ contains at least one edge between two vertices in $B$.
\end{observation}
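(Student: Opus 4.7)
The plan is to reduce the observation to the following claim: if $B$ is a block of $G$ with $|B| \ge 2$, then removing the edge set $E(B)$ of $B$ disconnects any two distinct vertices of $B$ in $G$. Granting this claim, the observation is immediate: any spanning tree $T$ must contain a $u$-$v$ path for every pair $u, v \in B$, so if $T$ contained no edge of $E(B)$ at all, then $T \subseteq G - E(B)$ and $T$ could not connect $u$ and $v$, a contradiction.

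To set up the claim, I would first observe that because $G$ is connected and has at least one edge, every vertex of $G$ has a neighbor, so no maximal clique can be a single vertex; hence $|B| \ge 2$. Fix distinct $u, v \in B$ and suppose for contradiction that there is a $u$-$v$ path $P$ in $G - E(B)$. Since the edge $\{u, v\}$ belongs to $E(B)$ and has been deleted, $P$ has length at least two, and closing $P$ with the edge $\{u, v\}$ produces a cycle $C$ in $G$ whose edges other than $\{u, v\}$ all avoid $E(B)$.

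The key step is then to invoke two standard block-theoretic facts: (i) every cycle of a graph is entirely contained in a single block, and (ii) any two distinct blocks of a graph share at most one vertex. Fact (ii) implies that $B$ is the unique block of $G$ containing both $u$ and $v$, while fact (i) applied to $C$ says that some block contains $C$; since $C \ni u, v$, that block must be $B$. Consequently every edge of $C$—in particular every edge of $P$—lies in $E(B)$, contradicting $P \subseteq G - E(B)$.

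I do not anticipate a real obstacle; the whole argument rests on these two elementary facts about blocks, both well-known. The only mild care needed is the initial observation that $|B| \ge 2$ (so that the edge $\{u,v\}$ used to close the cycle really exists) and the careful bookkeeping that, once $C$ is known to lie inside $B$, \emph{every} edge of $P$ is forced into $E(B)$, producing the desired contradiction.
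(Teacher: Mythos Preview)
Your argument is correct. The paper states this as an observation without proof, so there is nothing to compare; your route via the two standard block facts (every cycle lies in a single block, and distinct blocks share at most one vertex) is a clean way to justify it.
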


We take an arbitrary node of $\mathcal T$ as its root and assume then that $\mathcal T$ is a rooted tree.
In the following, we color the vertices of $G$ in black or white and compute a spanning even tree of $G$ (if it exists), which is done by the following algorithm.
\begin{enumerate}\setlength{\parskip}{0cm}
    \item Color all degree-1 vertices of $G$ with black.
    \item Repeat the following exhaustively: If there is a maximal clique $B \in \mathcal B$ such that
    \begin{itemize}
        \item $B$ has exactly one uncolored vertex $v$ and
        \item the other vertices have the same color,
    \end{itemize}
    then color $v$ with the color opposite to the other vertices in $B$.
    \item Let $G'$ be the graph obtained from $G$ by deleting all edges whose end vertices have the same color.
    Note that every edge that has an uncolored end vertex remains in $G'$.
    If $G'$ has more than one components, answer ``No'' and halt.
    \item From the root of $\mathcal T$, repeat the following in the top-down manner: Choose a maximal clique $B \in \mathcal B$ that has at least one uncolored vertex $v$ and all the ancestral maximal cliques $B' \in \mathcal B$ with $B' \neq B$ has no uncolored vertices.
    Note that $B$ may not be a clique in $G'$.
    \begin{itemize}
        \item If $B$ has no white vertices, color $v$ in white and color all other (uncolored) vertices in black.
        \item Otherwise, color all uncolored vertices in black.
    \end{itemize}
    \item Let $G''$ be the graph obtained from $G'$ by deleting all edges whose end vertices have the same color. Compute an arbitrary spanning tree of $G''$ and output it. 
\end{enumerate}

Let $c'$ be the partial coloring of $G$ computed in Steps~1~and~2 and let $C' = \{v \in V : c'(v) \text{ is black or white}\}$.
The correctness of the above algorithm follows from the below lemmas.

\begin{lemma}\label{lem:block:cor1}
    For any spanning even tree $T$ of $G$, it holds that $c(v) = c'(v)$ for $v \in C'$, where $c$ is the admissible coloring for $T$.
\end{lemma}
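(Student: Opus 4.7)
The plan is to fix an arbitrary spanning even tree $T$ of $G$ with admissible coloring $c$, and argue by induction on the order in which vertices are assigned colors by the algorithm that, whenever a vertex $v$ is placed in $C'$, its $c'$-color coincides with $c(v)$.

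For the base case, corresponding to Step~1, any degree-$1$ vertex $v$ of $G$ has a unique incident edge, which must belong to $T$ for $T$ to span $G$. Hence $v$ is a leaf of $T$, and by the definition of an admissible coloring $c(v)$ must be black, matching $c'(v)$.

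For the inductive step, corresponding to one application of Step~2, suppose the maximal clique $B \in \mathcal{B}$ has a unique uncolored vertex $v$ and all vertices of $B \setminus \{v\}$ share a common color $\alpha \in \{\textit{black}, \textit{white}\}$. By the inductive hypothesis, $c(u) = \alpha$ for every $u \in B \setminus \{v\}$. The main tool here is \cref{obs:block:maximal-clique}, which guarantees that $T$ contains at least one edge $e$ with both endpoints in $B$. If neither endpoint of $e$ were $v$, then both endpoints would already be colored $\alpha$ by $c$, contradicting the fact that $c$ is a proper $2$-coloring of $T$. Hence $v$ is one endpoint of $e$ and its other endpoint lies in $B \setminus \{v\}$ with color $\alpha$. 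Properness of $c$ then forces $c(v)$ to be the opposite color, which is exactly the color assigned to $v$ by Step~2.

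The only subtlety is dealing with the possibility that Step~2 is not order-independent in principle; but since each application is forced (it reads off $c(v)$ as the unique color compatible with properness on the edge in $T[B]$ guaranteed by \cref{obs:block:maximal-clique}), the induction goes through for any fixed order in which the algorithm performs Step~2, and the conclusion $c(v) = c'(v)$ for all $v \in C'$ follows.
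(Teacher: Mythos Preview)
Your proof is correct and follows essentially the same approach as the paper: both argue along the order in which the algorithm colors vertices, handle the Step~1 base case via the observation that degree-$1$ vertices must be leaves of $T$, and for Step~2 invoke \Cref{obs:block:maximal-clique} together with the properness of $c$ to force $c(v)$ to be the opposite color. The only cosmetic difference is that the paper phrases the argument as a minimal-counterexample contradiction rather than an explicit induction.
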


\begin{proof}
    Let $T$ be a spanning even tree of $G$ and let $c$ be the admissible coloring of $T$.
    We define a total order~$\prec$ on $C'$ in which $u \prec v$ if and only if $u$ is colored before $v$ during the execution of the algorithm.
    Suppose that $C'$ contains a vertex $v$ with $c(v) \neq c'(v)$.
    Moreover, we assume that $v$ is minimal with respect to $\prec$, that is, $c(u) = c'(u)$ for $u \prec v$. 
    Since every vertex colored in Step~1 is a leaf in $T$, $c'(v)$ is determined in Step~2.
    This implies that $v$ belongs to a maximal clique $B$ such that for $u \in B \setminus \{v\}$, it holds that $u \prec v$ and $c'(u) \neq c'(v)$.
    Due to the minimality of $v$, we have $c(u) = c'(u)$ for $u \in B \setminus \{v\}$.
    Thus, all the vertices in $B$ have the same color under $c$.
    However, by~\Cref{obs:block:maximal-clique}, $T$ contains at least one edge between two vertices in $B$, which implies that its end vertices have different colors in $c$, a contradiction.
\end{proof}

\begin{corollary}\label{cor:block:cor1}
    For every spanning even tree $T$ of $G$, the graph $G'$ obtained in Step~3 contains $T$ as a subgraph.
\end{corollary}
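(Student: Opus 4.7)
The plan is to deduce the corollary directly from Lemma~\ref{lem:block:cor1} together with the fact that the admissible coloring of a spanning even tree is, by definition, a proper $2$-coloring. So the argument will be essentially one short contrapositive observation; there is no real combinatorial obstacle here — the work has already been done in the preceding lemma.

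Concretely, I would fix a spanning even tree $T$ of $G$ and let $c$ denote its (uniquely determined) admissible coloring. To show $T \subseteq G'$, I would pick an arbitrary edge $\{u,v\}$ of $T$ and argue that it survives in $G'$. Recall that an edge of $G$ is removed in Step~3 only when \emph{both} endpoints already carry a color under the partial coloring $c'$ produced in Steps~1 and~2, and moreover those two colors agree. So it suffices to rule out this case.

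I would split on whether both $u$ and $v$ lie in $C'$. If at least one of them is uncolored by $c'$, the remark in Step~3 (``every edge that has an uncolored end vertex remains in $G'$'') immediately gives $\{u,v\} \in E(G')$. Otherwise $u,v \in C'$, and Lemma~\ref{lem:block:cor1} yields $c'(u)=c(u)$ and $c'(v)=c(v)$. Since $c$ is a proper $2$-coloring of $T$ and $\{u,v\}$ is an edge of $T$, we have $c(u) \neq c(v)$, whence $c'(u) \neq c'(v)$. Thus $\{u,v\}$ is not deleted in Step~3, so $\{u,v\} \in E(G')$, and the corollary follows.
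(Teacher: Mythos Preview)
Your proof is correct and is exactly the argument the paper intends: the corollary is stated immediately after Lemma~\ref{lem:block:cor1} without a separate proof, and your derivation---using that $c$ is a proper $2$-coloring of $T$ together with $c'\!\restriction_{C'}=c\!\restriction_{C'}$ from the lemma---is precisely the intended one-line deduction.
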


It remains to show that the algorithm always outputs a spanning even tree under the assumption that $G'$ is connected.
This would prove the correctness of the above algorithm.

\begin{lemma}\label{lem:block:cor2}
    Let $G'$ be the graph obtained in Step~3 of the algorithm.
    Suppose that $G'$ is connected.
    Then, the algorithm outputs a spanning even tree.
\end{lemma}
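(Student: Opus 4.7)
The plan is to establish three properties of $G''$ together with the final coloring $c$ produced by the algorithm: (a) $G''$ is connected; (b) every edge of $G''$ is bichromatic; and (c) every white vertex is a cut vertex of $G''$. Given these, any spanning tree $T$ of $G''$ inherits a proper $2$-coloring from (b), while (c) implies that every white vertex has degree at least $2$ in $T$, so every leaf of $T$ is black. By \Cref{obs:bichromatic-spanning-tree}, $T$ is then an even spanning tree of $G$, which is exactly the output of the algorithm.

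Property (b) is immediate: after Step~4 every vertex is colored, and $G''$ is defined by removing exactly the monochromatic edges. For (a), the key step is to verify that after Step~4 every maximal clique $B$ with $|B| \geq 2$ contains at least one black and at least one white vertex. Cliques processed in Step~4 satisfy this by inspection of the two branches of its conditional. A clique $B$ already fully colored after Step~2 cannot be monochromatic, since otherwise all edges of $B$ would be absent from $G'$, and the block-graph property---any $G$-path between two vertices of $B$ must traverse an edge of $B$---would force $G'$ to be disconnected, contradicting the hypothesis. Consequently, within each clique $B$, any two vertices of $B$ are connected in $G''$ by a path of length at most two through a white vertex of $B$; and since every edge of $G' \setminus G''$ lies inside some clique, it admits such a local bypass in $G''$. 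A standard cut argument using the connectivity of $G'$ then yields the connectivity of $G''$.

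Property (c), which I expect to be the main obstacle, is proved by case analysis on whether $v$ is a cut vertex of $G$. If so, $G - v$ is disconnected, and since $G'' - v$ has the same vertex set and is a subgraph of $G - v$, it is disconnected as well. Otherwise $v$ belongs to a unique maximal clique $B$. Because $v$ is white and hence not colored in Step~1, $v$ has degree at least $2$ in $G$; combined with $N_G(v) \subseteq B \setminus \{v\}$, this forces $|B| \geq 3$. Tracing how $v$ became white---via Step~2 or Step~4---and using that $v$ belongs to no clique other than $B$, one sees that the witnessing clique must be $B$ itself, and that $v$ is the unique white vertex of $B$. It follows that every edge of $B$ not incident to $v$ is monochromatic and hence absent from $G''$. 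For any two vertices $u_1, u_2 \in B \setminus \{v\}$ (which exist because $|B| \geq 3$), the block-graph property forces every $G$-path from $u_1$ to $u_2$ to use some edge of $B$; since no such edges remain in $G'' - v$, the vertices $u_1$ and $u_2$ lie in different components of $G'' - v$. Hence $v$ is a cut vertex of $G''$, which completes (c) and thus the proof.
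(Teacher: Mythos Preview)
Your overall strategy matches the paper's: show that every maximal clique is bichromatic in the final coloring (hence $G''$ is connected), and then show that every white vertex has degree at least $2$ in any spanning tree of $G''$. Your case split in (c) on whether $v$ is a cut vertex of $G$ is a clean variant of the paper's split on whether $v$ lies in a clique of size~$2$, and your argument there is correct.

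There is, however, a genuine gap in your verification of the key step for (a). Your two cases---``cliques processed in Step~4'' and ``cliques already fully colored after Step~2''---are not exhaustive. A clique $B$ may have uncolored vertices after Step~2 yet never be chosen in Step~4: this happens precisely when $B$ has exactly one uncolored vertex $c$ after Step~2, $c$ is the parent cut vertex of $B$ in the rooted block-cut tree, and $c$ gets colored when Step~4 processes the parent clique. After that, $B$ is fully colored and Step~4 skips it, so neither of your two cases applies. The paper handles this by observing that at most one vertex of $B$ (the parent cut vertex) can be colored by an earlier Step~4 iteration, and that if $B$ had only this one uncolored vertex after Step~2 then $B\setminus\{c\}$ must already contain both colors---otherwise Step~2 would have colored $c$. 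With this third case added, your argument goes through.
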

\begin{proof}
    We first observe that every maximal clique in $B \in \mathcal B$ is not monochromatic under $c'$, as otherwise $G'$ is disconnected due to \Cref{obs:block:maximal-clique}.
    We also observe that every maximal clique in $B \in \mathcal B$ is not monochromatic under $c''$, where $c''$ is the coloring obtained after executing Step~4.
    This can be seen as follows.
    Since $B$ is not monochromatic under~$c'$, $B$ has at least two uncolored vertices under~$c'$, that is, $|B \cap C'| \ge 2$.
    However, if $B$ has an uncolored vertex when executing Step~4 for $B$, we can ensure that $B$ has vertices of both colors.
    Thus, such an uncolored vertex $v \in C'$ must be a cut vertex that is shared with some maximal clique $B' \in \mathcal B$, and it is colored in Step~4 for $B'$.
    As Step~4 is executed in the top-down manner, the cut vertex $v$ must be the parent of $B$ in $\mathcal T$.
    Since $|B \cap C'| \ge 2$, there is still at least one uncolored vertex in $B$, when executing Step~4 for $B$.

    Now we show that the algorithm always output a spanning even tree.
    Since every maximal clique in $\mathcal B$ is not monochromatic under $c''$, $G''$ is connected.
    Let $T$ be an arbitrary spanning tree of $G''$, and we claim that $T$ is even, that is, $T$ has no white leaves (under $c''$).
    Let $v$ be a white vertex.
    Suppose that $v$ belongs to a maximal clique $B \in \mathcal B$ of size exactly~$2$.
    In this case, $v$ must be a cut vertex of $G$, as otherwise $v$ is colored with black in Step~1.
    This implies that there is a maximal clique $B' \in \mathcal B$ that contains $v$.
    Since $B'$ is not monochromatic, $T$ must contain an edge between $v$ and a black vertex in $B'$, implying that the degree of $v$ is at least~$2$ in $T$.
    Suppose otherwise that all the maximal cliques that $v$ belongs to have at least three vertices.
    When the color of $v$ is determined in either Step~2 or Step~4, there is a maximal clique $B$ with $v \in B$ such that all vertices in $B \setminus \{v\}$ are colored with black.
    Then, $B$ induces a star with center $v$ and at least two leaves in $G''$, which implies that the degree of $v$ is at least~2 in~$T$.
    Therefore, $v$ is not a leaf in~$T$.
\end{proof}

Suppose that $G$ has a spanning even tree.
By~\Cref{cor:block:cor1}, $G'$ is connected, and hence, by~\Cref{lem:block:cor2}, the algorithm correctly outputs a spanning even tree in this case.
Suppose otherwise.
By~\Cref{lem:block:cor2}, $G'$ must be disconnected, and hence the algorithm correctly answers ``No'' in Step~3.
This completes the proof of the following theorem.

\begin{theorem}
    There is a linear-time algorithm for deciding whether an input block graph has a spanning even tree.
    Moreover, the algorithm computes a spanning even tree if the answer is affirmative.
\end{theorem}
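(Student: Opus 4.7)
The plan is to combine the correctness arguments established above with a careful linear-time implementation analysis. For the decision correctness, I would invoke \Cref{cor:block:cor1} and \Cref{lem:block:cor2} in tandem: if $G$ admits a spanning even tree, \Cref{cor:block:cor1} guarantees that the graph $G'$ constructed in Step~3 is connected, and \Cref{lem:block:cor2} then produces a spanning even tree in Step~5; conversely, if the algorithm halts in Step~3 because $G'$ is disconnected, \Cref{cor:block:cor1} implies no spanning even tree exists, so the answer ``No'' is correct. This handles both directions of the decision problem in one stroke.

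For the running-time claim, I would verify that each of the five steps can be implemented in $O(|V|+|E|)$ time. The family $\mathcal B$ of maximal cliques and the block-cut tree $\mathcal T$ are computable in linear time~\cite{HopcroftT73}. Step~1 is a single pass over the vertices. Step~3 is an edge scan followed by a standard connectivity check, Step~4 is a top-down traversal of $\mathcal T$, and Step~5 is a standard spanning-tree extraction on $G''$; all are clearly linear. The interesting step is Step~2, for which I would maintain, for each block $B \in \mathcal B$, counters for the numbers of black, white, and uncolored vertices in $B$, together with a pointer to the unique uncolored vertex when that count equals~$1$. Whenever a vertex $v$ is coloured, the algorithm iterates over the blocks containing $v$, updates these counters, and pushes any block that has become \emph{ready} (exactly one uncolored vertex, all others sharing a colour) onto a work queue.

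The main obstacle I anticipate is justifying the linear bound on Step~2 despite the fact that a single block may have its counters updated many times. I would resolve this via two observations specific to block graphs: first, since every edge lies in exactly one maximal clique, the total number of vertex--block incidences satisfies $\sum_{B \in \mathcal B}|B| = O(|V|)$, so the aggregate update cost across all colouring events is $O(|V|)$; second, once a block has had its final uncolored vertex coloured, it is removed from consideration and never re-enters the queue, so the queueing overhead is also $O(|\mathcal B|) = O(|V|)$. Combining the correctness argument with these time bounds yields the theorem.
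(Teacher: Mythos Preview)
Your proposal is correct and matches the paper's approach exactly on the correctness side: the paper's proof of the theorem consists precisely of the two-direction argument you give, invoking \Cref{cor:block:cor1} and \Cref{lem:block:cor2}. Your running-time analysis is a welcome addition that the paper leaves implicit; one minor remark is that the justification for $\sum_{B\in\mathcal B}|B|=O(|V|)$ is more naturally derived from the block--cut tree structure (the sum equals $|V|+|\mathcal B|-1\le 2|V|-2$) than from the edge-partition property you cite, but the bound itself is correct.
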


\section{Conclusion}
In this paper, we discussed the complexity of the problem of finding a spanning even tree in a graph.
The problem is NP-hard even for planar graphs of maximum degree 7, while is polynomial-time solvable for bipartite graphs~\cite{EvenST:HoffmanW06}, for cographs, for cobipartite graphs, for unit interval graphs, for split graphs, and for block graphs.
It would be interesting to seek more algorithmic or complexity results of this problem on several graph classes, such as chordal graphs, interval graphs, and comparability graphs.
Furthermore, it would be worth considering the complexity on graphs of maximum degree 3 as it is in P if the input graph is 3-regular. 

\section*{Acknowledgments}
The authors are grateful to Kiyoshi Yoshimoto for bringing this topic to their attention and providing the manuscript of \cite{JacksonY24}.

\begin{sloppypar}
\printbibliography
\end{sloppypar}

\end{document}